\NeedsTeXFormat{LaTeX2e}
[1994/12/01]
\documentclass{aomart}
\pdfoutput=1
\usepackage{tikz}
\usetikzlibrary{calc,arrows,automata}
\usetikzlibrary{matrix,positioning,arrows,decorations.pathmorphing,shapes}
\usetikzlibrary{shapes,snakes}

\usepackage{algorithm}
\usepackage{algorithmic}
\chardef\bslash=`\\ 





\hfuzz1pc 


\newtheorem[{}\it]{thm}{Theorem}[section]
\newtheorem{cor}[thm]{Corollary}

\newtheorem{prop}[thm]{Proposition}

\theoremstyle{definition}
\newtheorem{defn}{Definition}[section]

\newtheorem*[{}\it]{notation}{Notation}




\newcommand{\eval}[2][\right]{\relax
  \ifx#1\right\relax \left.\fi#2#1\rvert}



\title[Irrelevant Components and Exact Computation of the Diameter Constrained Reliability]
{Irrelevant Components and Exact Computation of the Diameter Constrained Reliability}
\author{Eduardo Canale, Pablo Romero, Gerardo Rubino}
\email{canale@fing.edu.uy; pablo.promero@inria.fr; rubino@inria.fr} 
\address{Campus de la Universidad Nacional de Asunci\'on.\\ 
Ruta Mcal. Estigarribia, Km. 10,5. San Lorenzo - Paraguay.\\
Inria Rennes, Bretagne-Atlantique, Campus de Beaulieu.\\
PC 35042, RENNES Cedex}

\keyword{Diameter-Constrained Reliability}
\keyword{Computational Complexity}
\keyword{Irrelevant Links}
\keyword{Deletion-Contraction Formula}
\subject{primary}{matsc2000}{123456789}
\subject{secondary}{matsc2000}{FFFF}

\volumenumber{1}
\issuenumber{1}
\publicationyear{2014}

\begin{document}
 
\begin{abstract}
Let $G=(V,E)$ be a simple graph with $|V|=n$ nodes and $|E|=m$ links, a subset  
$K \subseteq V$ of \emph{terminals}, a vector $p=(p_1,\ldots,p_m) \in [0,1]^m$ and a positive integer $d$, called \emph{diameter}. 
We assume nodes are perfect but links fail stochastically and independently, with 
probabilities $q_i=1-p_i$. The \emph{diameter-constrained reliability} (DCR for short), is the probability that the 
terminals of the resulting subgraph remain connected by paths composed by $d$ links, or less. 
This number is denoted by $R_{K,G}^{d}(p)$. The general computation of the parameter $R_{K,G}^{d}(p)$ belongs to the class of $\mathcal{N}\mathcal{P}$-Hard problems, 
since is subsumes the complexity that a random graph is connected.\\  

A discussion of the computational complexity for DCR-subproblems is provided in terms of the number of terminal nodes $k=|K|$ and diameter $d$. 
Either when $d=1$ or when $d=2$ and $k$ is fixed, the DCR is inside the class $\mathcal{P}$ of polynomial-time problems.  
The DCR turns $\mathcal{N}\mathcal{P}$-Hard even if $k \geq 2$ and $d\geq 3$ are fixed, or in an all-terminal scenario when $d=2$. 
The traditional approach is to design either exponential exact algorithms or efficient solutions for particular graph classes.\\ 
 
The contributions of this paper are two-fold. First, a new recursive class of graphs are shown to have efficient DCR computation. 
Second, we define a factorization method in order to develop an exact DCR computation in general. 
The approach is inspired in prior works related with the determination of irrelevant links and deletion-contraction formula.  
\end{abstract}

\maketitle
\tableofcontents

\section{Motivation}\label{sect:intro}
The definition of DCR has been introduced in 2001 by H\'ector Cancela and Louis Petingi, inspired in delay-sensitive applications over 
the Internet infrastructure~\cite{PR01}. 
Nevertheless, its applications over other fields of knowledge enriches the motivation 
of this problem in the research community~\cite{Colbourn99reliabilityissues}. 

We wish to communicate special nodes in a network, called \emph{terminals}, 
by $d$ hops or less, in a scenario where nodes are perfect but links fail stochastically and independently. 
The all-terminal case with $d=n-1$ is precisely the probability that a random graph is connected, or 
\emph{classical reliability problem} (CLR for short). Arnon Rosenthal proved that the CLR is inside the class 
of $\mathcal{N}\mathcal{P}$-Hard problems~\cite{Rosenthal}. As a corollary, the general DCR is $\mathcal{N}\mathcal{P}$-Hard as well, 
hence intractable unless $\mathcal{P}=\mathcal{N}\mathcal{P}$. 

The focus of this paper is an exact computation of the DCR in a source-terminal scenario. 
The paper is structured in the following manner. 
In Section~\ref{DCR}, a formal definition of DCR is provided as a particular instance of a coherent stochastic binary system. 
The computational complexity of the DCR is discussed in terms of the diameter and number of terminals in Section~\ref{Prior}. 
There, we can appreciate that the source-terminal DCR problem belongs to the class of $\mathcal{NP}$-Hard problems, whenever $d \geq 3$. 
Therefore, only exponential-time (i.e. over-polynomial) exact algorithms are feasible, unless $\mathcal{P}=\mathcal{NP}$.\\ 

The main contributions are summarized in Sections~\ref{Exact}, \ref{Irrelevant-Components} and~\ref{Algorithm}. 
Specifically, a new family of graphs with efficient source-terminal DCR computation is introduced in Section~\ref{Exact}. 
A discussion on the determination of irrelevant components (links) is provided in Section~\ref{Irrelevant-Components}. 
There, we also include a sufficient condition to have an irrelevant component, which is stronger than previous sufficient conditions. 
Nevertheless, the determination of irrelevant links is still an open problem. In Section~\ref{Algorithm}, 
a set of elementary operations that are DCR-invariants are presented. Additionally, an algorithm combines these 
invariants with the deletion of irrelevant links and a recursive decomposition method in graphs. The spirit is to 
reduce/simplify the network as most as possible during each iteration, in order to return the source-terminal DCR exactly 
and efficiently. Finally, a summary of open problems and trends for future work is included in Section~\ref{Conclusion}.

\section{Terminology}\label{DCR}
We are given a system with $m$ components. These components are either ``up'' or ``down'', and 
the binary state is captured by a binary word $x=(x_1,\ldots,x_m) \in \{0,1\}^m$. Additionally we have a 
structure function $\phi: \{0,1\}^m \to \{0,1\}$ such that $\phi(x)=1$ if the system works 
under state $x$, and $\phi(x)=0$ otherwise. When the components work independently and stochastically 
with certain probabilities of operation $p=(p_1,\ldots,p_m)$, the pair $(\phi,p)$ defines a \emph{stochastic binary system}, 
or SBS for short, following the terminology of Michael Ball~\cite{Ball1986}. 
An SBS is \emph{coherent} whenever $x\leq y$ implies that $\phi(x) \leq \phi(y)$, where the partial order set $(\leq,\{0,1\}^m)$ 
is bit-wise (i.e. $x\leq y$ if and only if $x_i \leq y_i$ for all $i\in \{1,\ldots,m\}$). If $\{X_i\}_{i=1,\ldots,m}$ 
is a set of independent binary random variables with $P(X_i=1)=p_i$ and $X=(X_1,\ldots,X_m)$, then 
$r=E(\phi(X)) = P(\phi(X)=1)$ is the \emph{reliability} of the SBS.

We recall a bit of terminology coming from graph theory, which will be used throughout this treatment. 
A graph $G=(V,E)$ is \emph{bipartite} if there exists a bipartition $V = V_1 \cup V_2$ such that 
$E \subseteq \{ \{x,y\}: x \in V_1, y \in V_2\}$. 
A~\emph{vertex cover} in a graph $G=(V,E)$ is a subset $V^{\prime} \subseteq V$ such that $V^{\prime}$ meets all links in $E$. 
%
If $P=\{V_1,\ldots,V_c\}$ is a partition of $V$, the \emph{quotient graph} is $G^{\prime} = (P,E^{\prime})$, 
where $\{V_i,V_j\} \in E^{\prime}$ if and only if $i\neq j$ and there exists an edge from a vertex of $V_i$ to 
a vertex of $V_j$ in $E$. 
We say $v_j$ is \emph{reachable} from $v_i$ either when $v_i=v_j$ or there is a path from $v_i$ to $v_j$. 
In a simple graph $G$, reachability is an equivalence relation, and $c$, the number of classes in the quotient graph, 
is the number of \emph{connected components}. A graph $G$ is \emph{connected} if it has precisely one component. 
A deletion of link $e=\{x,y\} \in E$ in graph $G=(V,E)$ produces the graph $G-e=(V,E-e)$. 
A link contraction produces a graph $G*e = (V^{\prime},E^{\prime})$, 
with vertex-set $V^{\prime} = V- \{x\}$ and edge-set $E^{\prime}= E-\{\{v,x\},v \in V\} \cup \{\{v,y\}: \{v,x\}\in E\}$. 
Observe that a contraction may result in a multi-graph (i.e., a graph with multiple edges incident to fixed nodes). 
A node deletion produces a new graph $G-v = (V-\{v\},E-\{x,v\}: x \in V)$. 
A cut-vertex $v \in V$ in $G=(V,E)$ is a vertex such that $G-v$ has more components that $G$.\\ 

Now, consider a simple graph $G=(V,E)$, a subset $K \subseteq V$ and a positive integer $d$. 
A subgraph $G_x=(V,E_x) \subseteq G$ is \emph{$d$-$K$-connected} if $d_x(u,v)\leq d, \forall \{u,v\} \subseteq K$, 
where $d_x(u,v)$ is the distance between nodes $u$ and $v$ in the graph $G_x$. 
Let us choose an arbitrary order of the edge-set $E=\{e_1,\ldots,e_m\}$, $e_i\leq e_{i+1}$. 
For each subgraph $G_x=(V,E_x)$ with $E_x \subseteq E$, we identify a binary word $x \in \{0,1\}^m$, 
where $x_i=1$ if and only if $e_i \in E_x$; this is clearly a bijection. 
Therefore, we define the structure $\phi:\{0,1\}^m \to \{0,1\}$ such that $\phi(x)=1$ if $G_x$ 
is $d$-$K$-connected, and $\phi(x)=0$ otherwise. If we assume nodes are perfect but links fail stochastically and independently ruled 
by the vector $p=(p_1,\ldots,p_m)$, the pair $(\phi,p)$ is a coherent SBS. Its reliability, 
denoted by $R_{K,G}^{d}(p)$, is called \emph{diameter constrained reliability}, or DCR for short. 
A particular case is $R_{K,G}^{n-1}(p)$, called \emph{classical reliability}, or CLR for short.

In a coherent SBS, a \emph{pathset} is a state $x$ such that $\phi(x)=1$. 
A \emph{minpath} is a state $x$ such that $\phi(x)=1$ but $\phi(y)=0$ for all $y<x$ (i.e. a minimal pathset). 
A \emph{cutset} is a state $x$ such that $\phi(x)=0$, while a \emph{mincut} is a state $x$ such that $\phi(x)=0$ 
but $\phi(y)=1$ if $y>x$ (i.e. a minimal cutset). We will denote $\mathcal{O}_{d}^{K}(G)$ to the set of all 
$d$-$K$-connected subgraphs of a ground graph $G$. 

\section{Computational Complexity}\label{Prior}
The class $\mathcal{N}\mathcal{P}$ is the set of problems polynomially solvable by 
a non-deterministic Turing machine~\cite{Garey:1979:CIG:578533}. 
A problem is $\mathcal{N}\mathcal{P}$-Hard if it is at least as hard as every problem in the set $\mathcal{N}\mathcal{P}$ 
(formally, if every problem in $\mathcal{N}\mathcal{P}$ has a polynomial reduction to the former). 
It is widely believed that $\mathcal{N}\mathcal{P}$-Hard problems are intractable (i.e. there is no polynomial-time algorithm to solve them). 
An $\mathcal{N}\mathcal{P}$-Hard problem is $\mathcal{N}\mathcal{P}$-Complete if it is inside the class $\mathcal{N}\mathcal{P}$. 
Stephen Cook proved that the joint satisfiability of an input set of clauses in disjunctive form is an $\mathcal{N}\mathcal{P}$-Complete 
decision problem; in fact, the first known problem of this class~\cite{Cook1971}. 
In this way, he provided a systematic procedure to prove that a certain problem 
is $\mathcal{N}\mathcal{P}$-Complete. Specifically, it suffices to prove that the problem is inside the class $\mathcal{N}\mathcal{P}$, 
and that it is at least as hard as an $\mathcal{N}\mathcal{P}$-Complete problem. 
Richard Karp followed this hint, and presented the first 21 combinatorial problems inside this class~\cite{Karp1972}. 
Leslie Valiant defines the class \#$\mathcal{P}$ of counting problems, such that testing whether an element 
should be counted or not can be accomplished in polynomial time~\cite{Valiant1979}. 
A problem is~\#$\mathcal{P}$-Complete if it is in the set \#-$\mathcal{P}$ and it is at least as hard as any problem of that class.  

Recognition and counting minimum cardinality mincuts/minpaths are at least as hard as computing the reliability of a coherent 
SBS~\cite{Ball1986}. Arnon Rosenthal proved the CLR is $\mathcal{N}\mathcal{P}$-Hard~\cite{Rosenthal},  
showing that the minimum cardinality mincut recognition is precisely Steiner-Tree problem, included in Richard Karp's list. 
The CLR for both two-terminal and all-terminal cases are still $\mathcal{N}\mathcal{P}$-Hard, as Michael Ball and J. Scott Provan proved 
by reduction to counting minimum cardinality $s-t$ cuts~\cite{provan83}. 
As a consequence, the general DCR is $\mathcal{N}\mathcal{P}$-Hard as well. Later effort has been focused to particular cases of the DCR, 
in terms of the number of terminals $k=|K|$ and diameter $d$.

When $d=1$ all terminals must have a direct link, $R_{K,G}^{1}= \prod_{\{u,v\} \subseteq K} p(uv)$, 
where $p(uv)$ denotes the probability of operation of link $\{u,v\} \in E$, and $p(uv)=0$ if $\{u,v\} \notin E$. 
Furthermore, the reliability of critical graphs (in the sense that the diameter is increased under any link deletion) is 
again the product of all its elementary reliabilities. An example is the complete graph $K_n$ for diameter $d=1$, 
or a complete bipartite graph when $d=2$.\\ 

The problem is still simple when $k=d=2$. In fact, $R_{\{u,v\},G}^{2} =1- (1-p(uv))\prod_{w \in V-\{u,v\}}(1-p(uw)p(wv))$. 
H\'ector Cancela and Louis Petingi rigorously proved that the DCR is $\mathcal{N}\mathcal{P}$-Hard when $d\geq 3$ and $k\geq 2$ 
is a fixed input parameter, in strong contrast with the case $d=k=2$~\cite{CP2004}. 
The literature offers at least two proofs that the DCR has a polynomial-time 
algorithm when $d=2$ and $k$ is a fixed input parameter~\cite{sartor:thesis,SartorITOR2013a}. 
Pablo Sartor et. al. present a recursive proof~\cite{sartor:thesis}, while Eduardo Canale et. al. present 
an explicit expression for $R_{K,G}^{2}$ that is computed in a polynomial time of elementary operations~\cite{SartorITOR2013a}. 
E. Canale and P. Romero proved the remaining cases, showing that DCR computation belongs to the class of $\mathcal{N}\mathcal{P}$-Hard problems 
in an all-terminal scenario with $d\geq 2$. 

The whole picture of complexity analysis is shown in Fig.~\ref{DCR-complex} as a function of the different pairs $(k,d)$.
\begin{figure}[h!]\centering{
\begin{tikzpicture} [scale=1.6,nodop/.style={inner sep=0pt
}]

\path[draw] (0,1) -- (0,5) -- (5,5) -- (5,1);
\draw[dotted] (-0.3,2) -- (7,2);
\draw[dotted] (-0.3,4) -- (7,4);
\draw[dotted] (1.5,5.3) -- (1.5,4);

\node at (2.5, 5.7) {$k$ (fixed)};
\node at (-0.8, 3) {$d$};
\path[->][dotted] (-0.8, 3.2) edge (-0.8, 4);
\path[->][dotted] (-0.8, 2.8) edge (-0.8, 2);

\node at (0.75,5.3) {$2$};
\node at (1.9, 5.3) {$3\ldots$};
\node at (-0.4, 4.5) {$2$};
\node at (-0.4, 3.8) {$3$};
\node at (-0.4, 3) {.}; \node at (-0.4, 3.1) {.}; \node at (-0.4, 2.9) {.}; 
\node at (-0.4, 2.2) {$n-2$};
\node at (-0.4, 1.8) {$n-1$};
\node at (-0.4, 1.3) {.}; \node at (-0.4, 1.4) {.}; \node at (-0.4, 1.2) {.}; 

\node at (0.75, 4.5) {$O(n)$~\cite{CP2004}};
\node at (3.25, 4.5) {$O(n)$~\cite{SartorITOR2013a}};
\node at (2.5, 3) {$\mathcal{N}\mathcal{P}$-Hard~\cite{CP2004}};
\node at (2.5, 1.5) {$\mathcal{N}\mathcal{P}$-Hard~\cite{Rosenthal}};

\node at (6, 1.5) {$\mathcal{N}\mathcal{P}$-Hard~\cite{provan83}};

\path[draw] (5,5) -- (7,5) -- (7,1);
\path[draw] (5,5) -- (5,6);
\node at (6, 5.7) {$k=n$ or free};
\node at (6, 4.5) {$\mathcal{N}\mathcal{P}$-Hard;~\cite{CanaleRomero}};
\node at (6, 2.50) {$\mathcal{N}\mathcal{P}$-Hard;~\cite{CanaleRomero}};

\end{tikzpicture}} \caption{DCR Complexity in terms of the diameter $d$ and number of terminals $k=|K|$}\label{DCR-complex}
\end{figure}

\section{Exact Computation in Special Graphs}\label{Exact}
So far, an efficient (i.e., polynomial-time) computation of the DCR is available only for special graphs. 
Moreover, all of them tend to be sparse (i.e., with few links). 
To the best of our knowledge, the whole list includes grids~\cite{petingi2013diameter}, weak graphs~\cite{Canale2014134}, 
ladders and spanish fans~\cite{sartor:thesis}. 

A spanish fan is a wheel without a link of its external cycle. Ladders are recursively defined in the thesis~\cite{sartor:thesis}. 
The source-terminal DCR computation for both families of graphs is obtained recursively therein. 
A graph is weak if the removal of an arbitrary subset $E^{\prime}\subseteq E$ of $r$ links 
looses the diameter-connectivity property (specifically, there is a couple of terminals $x,y \in K$ such that $d(x,y)>d$ in the graph $(V,E-E^{\prime})$). 
Weak graphs with $r$ independent of $n$ admit an efficient DCR computation; 
the key idea of the proof is that only graphs with more than $m-r$ links should be tested 
to respect the diameter-constrained property. It is worth to remark that trees, cycles and Monma graphs are weak graphs. 
Since Monma graphs play a central role in robust network design, weak graphs provides a bridge between both areas, 
to know, network reliability and topological network design.  
Louis Petingi added grids to the list, which are planar graphs where each region is a square. 
In its proof, grids are first reduced, using the deletion of irrelevant links, and then finds an exact expression.
The reader can appreciate from Figure~\ref{DCR-complex} that 
an efficient computation is also feasible for diameter $d=2$ and a fixed number of input terminals $k$~\cite{SartorITOR2013a}. 
An explicit expression for $R_{K,G}^{d}(p)$ is provided in~\cite{SartorITOR2013a}.\\ 

We will show that the list of graphs that have exact source-terminal DCR computation will not include all bipartite graphs, 
unless $\mathcal{P}=\mathcal{NP}$. 
First, we name a specific family of graphs, which served as a point of departure in the analysis of source-terminal DCR.
\begin{defn}
Given a bipartite graph $G=(V,E)$ with $V=A \cup B$ and $E \subseteq A \times B$ and a diameter $d\geq 3$. 
Consider an elementary path $P$ with ordered node set $V(P)=\{s,s_1,\ldots,s_{d-3}\}$.  
\emph{Cancela-Petingi network} is defined as the graph $G^{\prime}=(V^{\prime},E^{\prime})$ 
with node-set $V^{\prime} = A \cup B \cup V(P)$ and edge-set $E^{\prime}=E \cup E(P) \cup I$, 
being $I=\{ \{s_{d^{\prime}},a\}, a\in A\} \cup \{ \{b,t \}, b\in B\}$, and 
all links in $G^{\prime}$ are perfect but links in $I$, which fail independently with identical probabilities $p=1/2$. 
\end{defn}

Figure~\ref{fig:covercut} illustrates Cancela-Petingi network for bipartite graph $C_{6}$, when $d=6$. 

\begin{figure}[h!]\centering{
\begin{tikzpicture}[scale=1]
\tikzstyle{every node}=[draw,shape=circle]
\path 
(-3,0) node (p0) {$s$}
(-2,0) node (p1) {$s_1$}
(-1,0) node (p2) {$s_2$}
 (0,0) node (p3) {$s_3$}

(1,-1) node (p4) {$a_1$}
(1, 0) node (p5) {$a_2$}
(1, 1) node (p6) {$a_3$}

(3,-1) node (p7) {$b_1$}
(3, 0) node (p8) {$b_2$}
(3, 1) node (p9) {$b_3$}

(5,0) node (p10) {$t$};

\draw 
(p0) -- (p1)
(p1) -- (p2)
(p2) -- (p3)

(p4) -- (p7)
(p4) -- (p9)

(p5) -- (p7)
(p5) -- (p8)

(p6) -- (p8)
(p6) -- (p9)

(p3) -- (p4)
(p3) -- (p5)
(p3) -- (p6)
(p7) -- (p10)
(p8) -- (p10)
(p9) -- (p10);
\end{tikzpicture}} \caption{Cancela-Petingi network with terminal set $\{s,t\}$ and $d=6$, 
for the particular bipartite instance $C_6$.}\label{fig:covercut}
\end{figure}

Cancela and Petingi used this family of graphs in order to prove that the source-terminal DCR belongs to the class of 
$\mathcal{NP}$-Hard problems~\cite{CP2004}. They showed that the reliability computation of these networks of these graphs is at least 
as hard as counting vertex covers in bipartite graphs, which is indeed a \#$\mathcal{P}$-Complete problem~\cite{provan83}. 

\begin{cor} \label{bipartite}
The general DCR computation in bipartite graphs belongs to the class of $\mathcal{NP}$-Hard problems.
\end{cor}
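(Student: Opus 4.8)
The plan is to observe that the Cancela--Petingi network $G'$ constructed above is \emph{itself} a bipartite graph, so that the reduction of Cancela and Petingi already takes place entirely within the class of bipartite inputs, and the desired hardness follows with no new reduction.

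First I would exhibit a layering of $G'$ that certifies its bipartiteness. Assign to each node a \emph{level}: put $s$ at level $0$ and each $s_i$ at level $i$, so that $s_{d-3}$ sits at level $d-3$; place every node of $A$ at level $d-2$, every node of $B$ at level $d-1$, and $t$ at level $d$. Inspecting the edge-set $E'=E\cup E(P)\cup I$, every link joins two nodes whose levels differ by exactly one: the path edges connect consecutive $s_i$; the links of $I$ of the form $\{s_{d-3},a\}$ join level $d-3$ to level $d-2$; the original edges $E\subseteq A\times B$ join level $d-2$ to level $d-1$; and the links $\{b,t\}$ join level $d-1$ to level $d$. A graph all of whose edges connect consecutive levels is bipartite, the two colour classes being the even-level and the odd-level nodes. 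Hence $G'$ admits the bipartition $V'=U\cup W$ with $U$ the even-level nodes and $W$ the odd-level ones.

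Next I would invoke the hardness statement recalled just above. Cancela and Petingi proved that computing the source--terminal DCR $R_{\{s,t\},G'}^{d}(p)$ over this family of networks is at least as hard as counting vertex covers in the underlying bipartite graph $G$, which is a \#$\mathcal{P}$-Complete problem. Since every \#$\mathcal{P}$-Complete problem is in particular $\mathcal{NP}$-Hard, and since — by the previous paragraph — the hard instances $G'$ are themselves bipartite, the source--terminal DCR problem restricted to bipartite graphs already inherits this hardness. As the general DCR problem (arbitrary terminal set $K$) subsumes the source--terminal case $K=\{s,t\}$, it is at least as hard, and therefore the general DCR computation on bipartite graphs is $\mathcal{NP}$-Hard.

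The single point that requires care — and the only content beyond citing the preceding construction — is the bipartiteness of $G'$: one must check that appending the path $P$ and the connector set $I$ to the bipartite core $G$ creates no odd cycle. The level/parity argument makes this transparent, since $s$ has degree one (so no cycle meets it) and every remaining cycle lies either inside the bipartite core $A\cup B$ or is a $4$-cycle of the form $s_{d-3}\,a\,b\,a'\,s_{d-3}$ or $t\,b\,a\,b'\,t$, all of which have even length. I do not expect any genuine obstacle here; the delicate part is merely verifying that the distance constraint built into $d$ is consistent with the layering, which it is, since the unique shortest $s$--$t$ route traverses the $d$ consecutive levels in exactly $d$ links.
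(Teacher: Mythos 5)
Your proposal is correct and takes essentially the same route as the paper's own (very terse) proof: both rest on the observation that the Cancela--Petingi networks are themselves bipartite, together with the cited Cancela--Petingi result that DCR computation on these networks is at least as hard as counting vertex covers in bipartite graphs, a \#$\mathcal{P}$-Complete problem. Your explicit level/parity verification of bipartiteness simply fills in a claim the paper asserts without proof, and it is sound.
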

\begin{proof}
Bipartite graphs include Cancela-Petingi graphs. The DCR computation in Cancela-Petingi graphs is at least as hard as counting 
vertex covers in bipartite graphs.
\end{proof}

Curiously, all graphs from the list are sparse. Proposition~\ref{bipartite} confirms that it is hard to find the DCR of dense bipartite graphs. 
The general DCR computation in complete graphs is $\mathcal{NP}$-Hard as well 
(all graphs are particular cases of the complete graph, choosing $p(uv)=0$ for some links).\\

From now on, we will stick to the source terminal case, where the terminal-set will be named $K=\{s,t\}$. 
The DCR for a new family of graphs will be efficiently found for this case.

\begin{defn}
Let $H$ and $G$ be two graphs. Let $e=xy$ an distinguished edge of $H$ and $s,t$ distinguished in $G$. 
We define the \emph{replacement} of $e$ by $G$ in $H$, and write $H+_eG$ to the graph obtained from $H-e\cup G$ 
by identifying $x$ with $s$ and $y$ with $t$. 
\end{defn}
For example, if $H$ is a triangle with nodes $\{1,2,3\}$, we consider $e=\{1,2\}$ and $G$ a triangle with nodes $\{s,t,u\}$,  
then the replacement of $e$ by $G$ in $H$ is the new graph $H+_{e}G$ with node-set $\{1,2,3,s,t,u\}$ and 
edge-set $\{13,23,su,ut,st\}$. 

\begin{defn}
Let $H=(V,E)$ and $G$ be two graphs, where $s,t$ represent distinguished nodes in $G$. 
The \emph{replacement} of $G$ in $H$ is denoted by $H_G$, and is obtained replacing successively all links from $H$ by $G$. 
If $E = \{ \{x_1,y_1\},\ldots, \{x_m,y_m\}\}$ then: 
\begin{equation}
H_G = H+(x_1,y_1)G+(x_2,y_2)G+\ldots+(x_m,y_m)G, 
\end{equation} 
\end{defn}

In words, we replace each link of $H$ by the source-terminal graph $G$.
\begin{prop}\label{generalizada}
Consider any fixed graph $H$, $K= \{u,v\} \subseteq V(H)$, $\{x,y\} \in E(H)$, and a positive integer $d\geq 1$. 
Then, $R_{K, H+_eG}^{d}$ depends polynomially on $R_{\{s,t\},G}^{d}$ for any fixed $K=\{u,v\}$ in $V(H)$. 
In particular, the same result holds for $H_G$.
\end{prop}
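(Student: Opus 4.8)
The plan is to exploit the fact that in $H+_eG$ the replacement block $G$ is glued to the rest of the network only at the two nodes $x=s$ and $y=t$; equivalently, $\{s,t\}$ is a separating pair isolating the interior of $G$ from $H-e$. First I would observe that, because of this, any simple $u$-$v$ path either avoids the interior of $G$ altogether or enters it at one of $s,t$ and leaves at the other, traversing a single $s$-$t$ subpath inside $G$. Hence, for the purpose of deciding the event $\{d_{H+_eG}(u,v)\le d\}$, the whole random block $G$ is summarised by one integer, its internal source-terminal distance $L:=d_G(s,t)\in\{1,\dots,|V(G)|-1\}\cup\{\infty\}$: replacing $G$ by a perfect path of length $L$ joining $x$ and $y$ leaves every relevant distance in the composite graph unchanged.

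Second, I would condition on $L$ and apply the law of total probability. For each value $\ell$, let $q_\ell$ be the probability --- computed over the randomness of $H-e$ alone, hence a constant depending only on $H$, its link probabilities, $K$, $d$ and $\ell$ --- that $u$ and $v$ lie within distance $d$ in the graph obtained from $H-e$ by adding a perfect $x$-$y$ path of length $\ell$. Any value $\ell>d$ makes every $u$-$v$ path through $G$ longer than $d$, so all such values collapse to a single constant $q_\infty$ corresponding to the block being unusable. Writing $\pi_\ell=P(L=\ell)=R^{\ell}_{\{s,t\},G}-R^{\ell-1}_{\{s,t\},G}$ (with the convention $R^{0}_{\{s,t\},G}=0$) and $P(L>d)=1-R^{d}_{\{s,t\},G}$, total probability gives
\begin{equation}
R^{d}_{K,H+_eG}=\sum_{\ell=1}^{d}q_\ell\bigl(R^{\ell}_{\{s,t\},G}-R^{\ell-1}_{\{s,t\},G}\bigr)+q_\infty\bigl(1-R^{d}_{\{s,t\},G}\bigr).
\end{equation}
This exhibits $R^{d}_{K,H+_eG}$ as an affine --- in particular polynomial --- function of the source-terminal reliabilities of $G$, linear of degree one in $R^{d}_{\{s,t\},G}$ itself, and the coefficients $q_\ell,q_\infty$ are obtained from the fixed graph $H$ in a bounded number of elementary operations.

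Third, for the full replacement $H_G$ I would iterate the argument over the $m$ edges of $H$, each carrying an independent copy of $G$ with i.i.d. internal distances $L_1,\dots,L_m$. Conditioning on the whole profile $(L_1,\dots,L_m)$, the event $\{d_{H_G}(u,v)\le d\}$ is a deterministic function of that profile (again each block replaced by a perfect path of its own length), and independence factorises its probability into a product $\prod_i\pi_{L_i}$. Summing the indicator against these products groups into a symmetric expression that is multilinear across blocks and of degree at most $m$; since every block shares the same profile, the result is a genuine polynomial of degree $\le m$ in the quantities $R^{1}_{\{s,t\},G},\dots,R^{d}_{\{s,t\},G}$, which proves the ``in particular'' clause.

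The step I expect to be the main obstacle is the first one, namely justifying rigorously that the block may be collapsed to its source-terminal distance: one must argue that no shortest $u$-$v$ path gains anything by re-entering $G$, that connections of length exceeding $d$ can be discarded uniformly, and that substituting a perfect length-$L$ path preserves every pairwise distance the structure function $\phi$ inspects. The remaining bookkeeping --- checking that the $q_\ell$ are well-defined constants and that the summation over distance profiles really collapses to a polynomial in the $R^{j}_{\{s,t\},G}$ --- is routine once the separating-pair reduction is established.
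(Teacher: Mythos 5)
Your proof is correct, but it follows a genuinely different route from the paper's. The paper argues through minpaths and inclusion--exclusion: since $H$ is fixed, it enumerates the $u$-$v$ minpaths of $H$, asserts that the diameter-constrained reliability of the replacement graph is a multinomial $P(R_1,\ldots,R_L)$ in the reliabilities $R_i$ of those minpaths (formula~\eqref{branches}), and then expands each $R_i$ as a sum over compositions $d_1+\cdots+d_{n_i}\le d$ of products of block reliabilities $R^{d_j}_{\{s,t\},G}$ (formula~\eqref{sum}). You instead exploit the fact that each inserted copy of $G$ meets the rest of the graph only at the two identified vertices, so each block is summarized by its internal $s$-$t$ distance; you then condition on the vector of these distances and apply the law of total probability with the exact-distance weights $\pi_\ell=R^{\ell}_{\{s,t\},G}-R^{\ell-1}_{\{s,t\},G}$. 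Both proofs share that core insight (it is implicit in the paper's composition sums), but your machinery buys rigor at precisely the two points where the paper's sketch is shaky: first, an intersection of minpath events is not a function of the individual minpath reliabilities (distinct $H$-minpaths share blocks), so the claim that inclusion--exclusion yields a multinomial in $R_1,\ldots,R_L$ needs justification, and your global conditioning on the full distance profile, with independence factorizing the probability across disjoint blocks, is exactly the device that supplies it; second, the paper's sum~\eqref{sum} multiplies the cumulative quantities $R^{d_j}_{\{s,t\},G}$, which counts overlapping events more than once, whereas your differences $\pi_\ell$ make the decomposition genuinely disjoint. What the paper's approach buys in exchange is brevity and the minpath-indexed expression~\eqref{expression} it later uses to compare exact methods; what yours buys additionally is the explicit affine formula for the single-edge replacement $H+_eG$ (degree one in the block reliabilities) and the degree bound $\le m$ for $H_G$, with coefficients $q_\ell$, $q_\infty$ that are manifestly well-defined constants determined by $H$, $K$ and $d$.
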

\begin{proof}
Minpaths of a source-terminal scenario are $s-t$ paths.  
Tese minpaths in $H$ could consist in a large number of paths, but constant in $H$, since $H$ is fixed. 
Now, the minpaths of $H_G$ consists on paths of the form 
$$
v_0^1, \ldots, v_{n_1}^1 = v_0^2, \ldots, v_{n_1}^2 =v_{0}^3,\ldots,v_{n_{k-1}}^{k-1}= v_0^k, \ldots, v_{n_k}^k, 
$$
where 
$$
v_0^1, v_0^2, \ldots, v_{0}^{k}, v_{n_k}^k 
$$ 
is a minpath in $H$, and for  $j = 1, \ldots, k$
$$
v_0^j, \ldots, v_{n_1}^j,
$$
is a minpath in $G$. 
Let us notice that the equality $v_{n_j}^j = v_0^{j+1}$  is in $H_G$, but, 
though $v_{n_j}^j$ and $v_0^{j+1}$ come from the vertices $\{s,t\}$, they could come from different vertices,  for instance, 
$v_{n_j}^j$ could come from $s$ but $v_0^{j+1}$ from $t$. 
By Inclusion-Exclusion formula, the reliability of $H$ is a polynomial on the reliability of its minpaths. 
If $R_i$ from $i=1,\ldots, L$ represent the reliability of the $i$-th minpaths of $H$ in some order, then, there exits a multinomial 
$P(x_1,x_2,\ldots,x_L)$ such that 
\begin{equation}\label{branches}
R_{\{u,v\},H}^{d} = P(R_1,R_2,\ldots, R_L).
\end{equation}

Then, if the $i$-th minpath has length $n_i$, we have
\begin{equation}\label{sum}
R_i = \sum_{d_1+\ldots+d_{n_i} \leq d} \prod_{j=1}^{n_j} R_{\{s,t\},G}^{d_j}
\end{equation} 
\end{proof}

Proposition~\ref{generalizada} tells us that the replacement operation is a complexity invariant.
%

A closed formula for $R_{\{s,t\},H_G}^{d}$ is obtained replacing~\eqref{branches} into \eqref{sum}: 
\begin{equation}\label{expression}
R_{\{u,v\},H}^{d} = P(\sum_{d_1+\ldots+d_{n_i} \leq d} \prod_{j=1}^{n_j} R_{\{s,t\},G}^{d_j},\ldots, \sum_{d_1+\ldots+d_{n_l} \leq d} \prod_{j=1}^{n_l} R_{\{s,t\},G}^{d_j}).
\end{equation}
Expression~\eqref{expression} could be useful in order to compare the performance of exact approaches by replacement in fixed graphs. 
We recommend to pick a member $G$ from the list of graphs that accept polynomial-time DCR computation.

\section{Exact computation and Irrelevant Components}\label{Irrelevant-Components}
In this section, we explore exact computation of the diameter constrained reliability for general graphs. 
A natural approach is to enrich this field with relevant results coming from classical reliability 
(here, we will term \emph{classical reliability} as reliability with no diameter-constraint). 

The literature that deals with exact computation in the classical reliability measure is vast and rich. 
We invite the reader to find a list of references in the survey~\cite{NET:NET3230250308}, and Chapter 7 from~\cite{rubino2009rare}. 
In~\cite{NET:NET3230250308}, Suresh Rai et. al. suggest a classification of exact approaches into three classes, to know:
\begin{itemize}
 \item Inclusion-Exclusion methods.
 \item Sum of Disjoint Products (SDPs). 
 \item Factorization methods.

\end{itemize}
Inclusion-Exclusion methods are still valid for arbitrary stochastic binary systems. Indeed, the key idea is to enumerate all minpaths 
$M_1,\ldots,M_r$ and to compute the reliability $R$ in the following manner:
\begin{equation} \label{uniones}
R = P(\cup_{i=1}^{r} M_i), 
\end{equation}
where the last union is developed using inclusion-exclusion principle. The main drawback of this family of techniques is that 
they deal with a full enumeration of minpaths (or cutsets). 

The key idea from the second class of methods is to re-write Expression~\eqref{uniones} in a sum of disjoint products, in order to 
translate that probability into a sum. Valuable references are available in the classical reliability context~\cite{NET:NET3230250308}.  
To the best of our knowledge, there is no development of SDPs to find the diameter-constrained reliability. 
For instance, Ahmad method finds all branches from source to terminal, and the sums their probabilities. 
Ahmad method~\cite{Ahmad1982} can be adapted in a straightforward manner, 
by means of a deletion of branches which are larger than the diameter $d$. 
However, a more efficient method would delete those branches online, while branches are built.

A seminal work from the third class of methods is due to Moskowitz, inspired in electrical circuits~\cite{6372698}. 
He observed that in the $K$-terminal reliability, the following contraction-deletion formula holds:
\begin{equation}
R_{K,G} = (1-p_e)R_{K,G-e}+pR_{K^{\prime},G*e}, 
\end{equation}
being $G*e$ the contraction of link $e=\{x,y\}$, and $K^{\prime}$ the new terminal set in $G*e$, where the identified node 
is a terminal if and only if $x$ or $y$ are terminals in $G$. This idea helps to successively reduce the graph, and 
finish the sequence in a trivial graph (non-connected or trivially connected). The process requires an exponential 
number of operations in a worst case, but it can be accelerated choosing the components to delete in a specific order. 
In network reliability context, it is recommended to combine deletion-contraction formula with the deletion of irrelevant 
components iteratively, in order to accelerate the reduction of the network size~\cite{NET:NET3230250308,doi:10.1137/0214057}.

The determination of irrelevant components (links) in classical network reliability is already understood. 
However, finding irrelevant links in diameter constrained reliability is a new topic, that has been addressed in recent works, 
for the specific source-terminal case~\cite{CEKP2012,petingi2013diameter}. 
Irrelevance in the diameter-constraint scenario is formalized in the following decision problem.
\begin{defn}[Relevance of a Link in DCR]
INPUT: graph $G=(V,E)$, nodes $s,t \in V$, link $e \in E$ and positive integer $d: 2\leq d \leq |V|$.\\
PROPERTY: $G$ has an $s-t$ path $P$ with length not greater than $d$, such that $e\in P$. 
\end{defn}

Suppose that we require to connect the terminals $K=\{st\}$ 
in a network $G$ using $d$ hops or less. Therefore, a link $e=\{x,y\}$ is \emph{irrelevant} if there is no elementary path $P$ 
with length $l(P)\leq d$ such that $e \in P$. The first work in the area is provided by Cancela et. al.~\cite{CEKP2012}. 
The authors propose a sufficient condition to determine when a link is irrelevant. Specifically, they observed that 
if $d(s,x)+d(y,t)\geq d$ and $d(s,y)+d(x,t)\geq d$, then $e=\{x,y\}$ is irrelevant. 
However, they pointed out that the idea does not work for the graph from 
Figure~\label{fig:red}. There, $d(s,x)=1$ and $d(y,t)=2$, but link $e=\{x,y\}$ is irrelevant when $d=6$. 
Louis Petingi proposed a stronger sufficient condition: if $d_{G-e}(s,x)+d_{G-e}(y,t)\geq d$ and $d_{G-e}(s,y)+d_{G-e}(x,t)\geq d$, 
then $e=\{x,y\}$ is irrelevant. Observe the condition is identical, but distances are taken in the graph $G-e$, 
since link $e$ should be used only once. Let us call Conditions 1 and 2, in chronological order. 
In Figure~, we can appreciate link $e$ is irrelevant whenever $d \leq 6$. In order to fix ideas, consider $d=5$. 
Condition $1$ does not return link $e$ as irrelevant, since $d(s,x)+d(y,t) = 1+2< 5$. However, Condition $2$ certificates correctly 
that $e$ is irrelevant, since $d_{G-e}(s,x)+d_{G-e}(y,t)=1+4=5<6$ and $d_{G-e}(s,y)+d_{G-e}(x,t)=4+1=5 \geq 5$.\\

\begin{figure}[h!]\centering{
	 \begin{tikzpicture}  
	 [scale=1.2,>=stealth,shorten >=0.1pt, auto,  thick,
	nodot/.style={circle, draw = black , very thin, minimum size=12pt, inner sep=0pt, font=\scriptsize },                     
	 ]
	\begin{scope}[xshift=0cm,scale=1]
	\node [nodot] (s) at (0,0) {$s$};
	\node [nodot] (1) at (1,0) {1};
	\node [nodot] (2) at (1,1) {2};
	\node [nodot] (3) at (2,2) {3};
	\node [nodot] (4) at (3,2) {4};
	\node [nodot] (5) at (4,2) {5};
	\node [nodot] (6) at (5,1) {6};
	\node [nodot] (t) at (5,0) {t};

      \draw (s)--(1)--(2)--(3)--(4)--(5)--(6)--(t);
      \draw (1)--(4);      
      \draw (1)--(t);   
	\end{scope}
	\end{tikzpicture}} \caption{Sample graph $G$ with an irrelevant link $e=\{1,2\}$ when $d=6$.}
\label{fig:red}
\end{figure}

We invite the reader to check that $e$ is irrelevant when $d=6$, but Conditions 1 and 2 fail to give a positive certificate of irrelevance. 
In this paper, we will present a stronger sufficient condition (that we will term Condition 3 for chronological reasons). Let us denote 
$sPt$ an $s-t$ path, and the concatenation of paths $P_1$ and $P_2$ as $P_1P_2$. The directed path composed by link $\{x,y\}$ from $x$ to $y$ 
is denoted $(x,y)$. The key point is that a successful path $sPt$ should either present the shape $sP_1(xy)P_2t$ or $sP_1(yx)P_2t$, 
In both cases, paths $P_1$ does not contain nodes $t$ neither $y$, and $P_2$ does not contain $x$ neither $s$ must be disjoint. 

\begin{prop}
Let us consider a simple graph $G=(V,E)$, two terminal nodes $\{s,t\} \in K \subseteq V$ and a link $e= \{x,y\} \in E$. 
If $d_{G-y-t}(s,x) + d_{G-s-x}(y,t)\geq d$ and $d_{G-x-t}(s,y) + d_{G-s-y}(x,t)\geq d$, then $e$ is irrelevant. 
\end{prop}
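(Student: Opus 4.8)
The plan is to argue by contradiction, exploiting the simplicity (elementarity) of any candidate $s$--$t$ path together with the orientation in which it crosses $e$. Assume that $e=\{x,y\}$ is \emph{not} irrelevant. Then there is an elementary $s$--$t$ path $P$ with $l(P)\le d$ and $e\in P$. Because $P$ is elementary it visits each vertex at most once, so it traverses the edge $e$ exactly once and in exactly one of its two orientations. This yields the two shapes already announced before the statement, namely $P=sP_1(x,y)P_2t$ or $P=sP_1(y,x)P_2t$, where $P_1$ runs from $s$ to the first endpoint of $e$ and $P_2$ from the second endpoint of $e$ to $t$.

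First I would treat the orientation $P=sP_1(x,y)P_2t$. Here $P_1$ is an $s$--$x$ subpath and $P_2$ a $y$--$t$ subpath. Since $P$ is elementary, the vertex $y$ occurs strictly after $x$ and $t$ is the final vertex, so neither $y$ nor $t$ lies on $P_1$; hence $P_1$ is a genuine $s$--$x$ path in the deleted graph $G-y-t$, whence $l(P_1)\ge d_{G-y-t}(s,x)$. Symmetrically, $s$ is the initial vertex and $x$ occurs strictly before $y$, so neither $s$ nor $x$ lies on $P_2$, placing $P_2$ in $G-s-x$ and giving $l(P_2)\ge d_{G-s-x}(y,t)$. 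Adding the single edge $e$,
\[
l(P)=l(P_1)+1+l(P_2)\ge d_{G-y-t}(s,x)+d_{G-s-x}(y,t)+1\ge d+1,
\]
by the first hypothesis, contradicting $l(P)\le d$.

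The second orientation $P=sP_1(y,x)P_2t$ is handled by the mirror-image argument: now $P_1$ is an $s$--$y$ path avoiding $x$ and $t$, so it lies in $G-x-t$ with $l(P_1)\ge d_{G-x-t}(s,y)$, while $P_2$ is an $x$--$t$ path avoiding $s$ and $y$, so it lies in $G-s-y$ with $l(P_2)\ge d_{G-s-y}(x,t)$; the second hypothesis then forces $l(P)\ge d+1$, again a contradiction. Since both orientations are impossible, no elementary $s$--$t$ path of length at most $d$ can use $e$, so $e$ is irrelevant.

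I expect the only delicate point to be the bookkeeping in the middle paragraphs: precisely which two vertices each subpath is forbidden to revisit, and the verification that this forbidden pair is exactly the one whose deletion appears in the matching hypothesis. Once elementarity is used to pin down that $P_1$ avoids $\{y,t\}$ (resp.\ $\{x,t\}$) and $P_2$ avoids $\{s,x\}$ (resp.\ $\{s,y\}$), the inequalities are immediate, and the factor $+1$ coming from the edge $e$ is what upgrades the hypothesis $\ge d$ into the strict violation $l(P)\ge d+1>d$. Note also that if any of the four distances is infinite, the corresponding subpath cannot exist at all, which only strengthens the contradiction.
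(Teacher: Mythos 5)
Your proof is correct and follows essentially the same route as the paper's: assume a successful elementary $s$--$t$ path $P$ with $e\in P$ and $l(P)\le d$, split into the two orientations $sP_1(x,y)P_2t$ and $sP_1(y,x)P_2t$, place each subpath in the corresponding vertex-deleted subgraph, and derive $l(P)\ge d+1$, a contradiction. If anything, your write-up is tighter than the paper's, which asserts $P_1\subseteq G-y-t$ and $P_2\subseteq G-x-s$ without spelling out the elementarity bookkeeping you make explicit.
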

\begin{proof}
Suppose $P$ is a successful path (i.e., $e\i P$ and $l(P)\leq d$). There are two possible cases:
\begin{itemize}
 \item $P = sP_1(xy)P_2t$ where $P_1 \subseteq G-y-t$ and $P_2 \subseteq G-x-s$. Then 
$l(P)= l(P_1)+1+l(P_2) \geq d_{G-y-t}(s,x)+1+d_{G-x-s} \geq d+1$, which is a contradiction, since $l(P)\leq d$.
  \item $P = sP_1(yx)P_2t$ where $P_1 \subseteq G-t-x$ and $P_2 \subseteq G-y-s$. Then 
$l(P)= l(P_1)+1+l(P_2) \geq d_{G-x-t}(s,x)+1+d_{G-y-s} \geq d+1$, which is a contradiction, since $l(P)\leq d$.
\end{itemize}
Therefore, there is no successful path, and link $e$ is irrelevant.
\end{proof}

The graph inclusions $G-e \subseteq G-y-t$ and $G-e \subseteq G-x-s$ hold in general. 

Then, $d_{G-e}(s,x)+d_{G-e}(y,t) \leq d_{G-y-t}(s,x) + d_{G-s-x}(y,t)$. As a consequence, if $d_{G-e}(s,x)+d_{G-e}(y,t) \geq d$ 
then $d_{G-y-t}(s,x) + d_{G-s-x}(y,t)\leq d$, and Condition 3 is stronger than Condition 2. 

Let us return to Figure~\ref{fig:red} for diameter $d=6$. Condition 2 fails, since $d_{G-e}(s,x)+d_{G-e}(y,t)=1+4=5<6$. 
However, Condition 3 gives a positive certificate of irrelevance, since 
$d_{G-y-t}(s,x) + d_{G-s-x}(y,t) = 1 + 5 = 6 \geq 6$, and $d_{G-x-t}(s,y) + d_{G-s-y}(x,t) = \infty \geq d$.  
In this case it is impossible to visit node $y$ before $x$, and $t$ is unreachable from $x$ in graph $G-x-t$.\\

We have already three sufficient conditions to determine whether a certain link is irrelevant or not. In~\cite{CEKP2012}, the authors 
state that an efficient condition to determine irrelevant links is still an open problem. 
A classical result from complexity theory is that finding the Longest Path between two nodes is an $\mathcal{NP}$-Hard problem 
(since Hamilton Path can be reduced to it~\cite{Garey:1979:CIG:578533}). 
More recently, Louis Petingi proved that finding the longest path through a certain link 
belongs to the class of $\mathcal{NP}$-Hard problems, where links can assume integer costs~\cite{petingi2013diameter}. 
Let us show here that Relevance of a Link in DCR is a particular subset of instance of the following decision problem:
\begin{defn}[Shortest Path]
INPUT: Graph $G=(V,E)$, costs $c:E\to \mathbb{Z}$ in the edges, nodes $s,t \in V$, integer $k$.\\
PROPERTY: $G$ has an $s-t$ path $P$ with length not greater than $k$. 
\end{defn}
Longest Path can be reduced to Shortest Path taking opposite values for edges. 
Moreover, a positive certificate of an arbitrary instance for Shortest Path is directly recognizable. 
Then, Shortest Path belongs to the class of $\mathcal{NP}$-Complete problems.\\

Let us show here that Relevance of a Link in DCR is included in Shortest Path. First, we warn that this is not a proof of hardness 
(just to know, it is a special sub-problem of a hard one). Consider an arbitrary instance $(G,s,t,e,d)$ for Relevance of a Link in DCR. 
Assign unit cost $c(e^{\prime})=1$ to all links $e^{\prime} \in E-\{e\}$, but $c(e)=-(d-1)$. 
Consider Shortest Path instance $(G,s,t,c,0)$, this is, with cost function $c$ and integer $k=0$. 
If we are able to find a path $P$ with cost $c(P)\leq 0$, 
the path necessarily contains edge $e$. 
Additionally, if it has length $l(P)=l$ then its cost equals $c(P)=(l-1)-(d-1)=l-d \leq 0$, so $l\leq d$. In that case, path $P$ is a solution 
for Relevance of a Link in DCR. 

Here, we just elucidate a bridge between Shortest Path an our problem of interest. However, an efficient determination  
of irrelevant links in DCR is still an open problem.

\section{Algorithm} \label{Algorithm}
We will combine the strength of Moskowitz decomposition formula (that is adapted in a diameter constrained scenario) with 
Condition 3 and elementary operation that are DCR invariants. The spirit of this technique is either simplify 
or reduce the size of the graph as most as possible during each iteration.  
We will stick to the previous notation (graph $G$, source and terminals $s,t$, diameter $d$):
\begin{itemize}
 \item \emph{Pending-Node}: If the source $s$ (idem terminal $t$) is pending on a link $e=\{s,x\}$ with reliability $p_e$, then we contract link $e$, and 
replace $G$ for its contraction $G*e$. The invariant is $R_{\{s,t\},G}^{d} = R_{\{s^{\prime},t\},G*e}^{d-1}$, being $e^{\prime}$ the 
new source. All non-terminal pending nodes are deleted.
 \item \emph{Perfect-Path}: If a path $P=\{v_1,\ldots,v_n\}$ is an induced subgraph for $G$ and links have elementary reliabilities $p(v_i,v_i+1)$. 
then we re-assign the link reliabilities $p(v_i,v_{i+1})=1$ for all $i=1,\ldots,n-2$ but $p(v_{n-1},v_n)= \prod_{i=1}^{n-1}p(v_i,v_{i+1})$.
 \item \emph{Perfect-Neighbors}: if the source $s$ (idem terminal $t$) has all perfect links to its neighbors $N(s)$, then $s \cup N(s)$ is a 
new vertex in $G^{\prime}$ and $R_{\{s,t\},G}^{d} = R_{\{s,t\},G^{\prime}}^{d-1}$.
 \item \emph{Perfect-Cut-Node}: if $v$ is a cut-node (i.e. $G-v$ has more than one component), first delete 
components with all non-terminal nodes (observe that we cannot finish with more than two components). 
Second, apply Perfect-Neighbors to $v$ on both sides.
 \item \emph{Parallel-Links}: If we find two links $e_1$ and $e_2$ from the same nodes with elementary reliabilities $p_1$ and $p_2$, 
they are replaced by a new link $e$ with reliability $p_1+p_2-p_1p_2$. 
\end{itemize}
This set of five elementary operations are combined with Condition 3 and Moskowitz decomposition in Algorithm $IP_5M$. 
The name $IP_5M$ stands for the acronym $I$ (deletion of Irrelevant links), $P_5$ (apply the ``5P'' elementary operations), and 
$M$ (Moskowitz decomposition).

\begin{algorithm}[H]
\caption{$R = IP_5M(G,s,t,d)$} \label{BLocal}
\begin{algorithmic}[1]
\IF{$HasPefectPath(G,d)$}
\RETURN $R=1$
\ENDIF
\IF{$Distance(s,t)<d$}
\RETURN $R=0$
\ENDIF
\STATE $G \leftarrow I(G,Condition3)$
\STATE $(G,s,t,d) \leftarrow 5P(G,s,t,d)$
\STATE $e \leftarrow NonPerfectRandom(E(G))$
\RETURN $(1-p_e) IP_5M(G-e,s,t,d)+p_e IP_5M(G*e,s,t,d)$
\end{algorithmic}
\end{algorithm}

The block of Lines $1$ to $6$ tests the termination (either we will have a perfect path of length $d$ or less, or there is no $s-t$ path 
with such distance). The extreme situations are guaranteed since links will take values in $\{0,1\}$ during the process. In Line $7$, 
Condition $3$ is revised in every link. Each irrelevant link is there removed. Line $8$ deserves further explanation. 
There, the five ``P'' operations (to know, Pending-Terminal, Perfect-Path, Perfect Neighbors, 
Perfect Cut-Node and Parallel-Links) are introduced, following the list order. Here we remark that each elementary operation takes place 
until no additional reduction with that operation is possible. The reader is invited to check that all elementary operations 
run in linear time, but Perfect-Cut-Node, where it could run in order $O(|V|^2)$ if the number of cut nodes is $O(|V|)$ 
(in that case, this elementary operation will reduce the graph in a large manner). 

For a better understanding of Algorithm $IP_5M$, we will introduce network $G$ from Figure~\ref{fig:red} with diameter $d=6$ and identical 
probabilities $p_e=p \in (0,1)$ as a pictorial example. Since $s-1-t$ is a path with length two, Block of lines $1$-$6$ do not 
meet a termination case. In Line $7$, we test links for Condition $3$. As seen before, link $\{1,2\}$ is correctly detected as irrelevant. 
The reader can observe that links $\{2,3\}$ and $\{3,4\}$ are irrelevant as well, but they are not detected by Condition $3$. 
Then, pending nodes are found. Nodes $2$ and $3$ are deleted, while a contraction of nodes $s$ and $1$ takes place, and 
$R_{\{s,t\},G}^{6} = pR_{\{s,t\},C_5}^{5}$. Observe that $G$ has already been reduced to a cycle. Then, Perfect-Path is 
applied, and links $\{1,4\},\{4,5\}$ and $\{5,6\}$ are perfect, while $p(5,6)=p^4$. Immediately, Perfect-Neighbors is called, 
with no effect in this case. The same occurs with Functions Cut-Node and Parallel-Links. In Line $9$, a non-perfect link is chosen 
uniformly at random. In the case, links $\{6,t\}$ and $\{1,t\}$ are the only possibilities. Without loss of generality, let us assume 
that $\{6,t\}$ is chosen. Correspondingly, Moskowitz decomposition formula is applied using link $e=\{6,t\}$, and:
\begin{equation}
R_{\{s,t\},C_5}^{5} = (1-p^4)R_{\{s,t\},P_4}^{5}+p^4R_{\{s,t\},C_4}^{5} 
\end{equation}
Finally, $IP_5M$ is called twice in order to find respectively $R_{\{s,t\},P_4}^{5}$ and $R_{\{s,t\},C_4}^{5}$. 
In the first network $P_4$, pending non-terminal nodes $6$ and $5$ and then $4$ will be removed, and the only path $\{1,t\}$ has probability $p$, 
so $R_{\{s,t\},P_4}^{5}=p$. In the call for the cycle $C_4$, a perfect path between terminals is found, and 
$R_{\{s,t\},C_4}^{5}=1$. The result is precisely:
\begin{equation}\label{correct}
R_{\{s,t\},G}^{6}(p) =  pR_{\{s,t\},C_5}^{5} = p [(1-p^4)p + p^4]=(1-p^4)p^2+p^{5}.
\end{equation}
By a sum of disjoint products, a direct verification is available. 
 Let us consider Figure~\ref{fig:red} again. Links $\{1,2\},\{2,3\}$ and $\{3,4\}$ are irrelevant. The success 
occurs either when path $s-1-4-5-6-t$ works (probability $p^5$) or it does not but path $s-1-t$ works (event with probability $(1-p^4)p^2$). 
Both events are mutually exhaustive and disjoints for the connectivity success under diameter constrained $d=6$; so, these terms 
must be added, and Equation~\eqref{correct} is retrieved.

\section{Concluding Remarks and Trends for Future Work} \label{Conclusion}
The system under study has perfect nodes but imperfect links, that fail stochastically and independently. 
The classical reliability problem aims to find the connectedness probability of target nodes, called terminals. 
In our system, we also require terminals to be within a specified number of hops or less, called diameter.\\

The theory of diameter-constrained reliability DCR inherits several ideas from classical reliability. 
Its hardness is inherited; hence exponential time exact algorithms have been presented so far in order to find de DCR of general graphs. 
In the $K$-terminal reliability, an efficient computation of the DCR is available in specific families of graphs, 
to know, weak graphs, ladders and spanish fans. In the source-terminal scenario, a multinomial recursive expression for the 
DCR is presented for complete graphs with four different link classes (i.e., different elementary reliabilities on their link), 
and grid graphs (i.e., planar graphs such that every region is a square) also accept efficient DCR computation.  

Additionally, Monte-Carlo based approaches and other approximation techniques have been inherited from classical reliability as well, 
with minor modifications. The three classes of methods for exact computation (Sum of Disjoint Product, Inclusion-Exclusion and 
Factorization) require adaptations. Remarkably, the contraction operation is not feasible (since the diameter is modified), 
and Moskowitz rule decomposes the problem in an on-off link state, where ``on'' state now means that the link is perfect. 
A challenging task now is to determine irrelevant links, even in a source-to-terminal context, when the diameter constraint is included.\\

In this work we studied exact DCR computation for source-terminal scenario. 
We introduce a recursive family of replacement graphs, that receive a certain graph $G$ that has an 
exact and efficient DCR computation, and returns the DCR for a replacement $H_G$ of $G$ by each link in $H$. 
Replacement is an complexity invariant.   

Additionally, a new sufficient condition for the determination of relevant links is provided, together with 
elementary operations that are DCR invariants. They were combined with Moskowitz decomposition, resulting in an algorithm, 
called $IP_5M$. This algorithm iteratively reduces/simplifies the graph putting all previous results together. 
A pictorial example shows the strength of this algorithm for graphs with reduced size.\\

There are several aspects that deserve future work. A full comparative analysis between the different exact algorithms 
in the literature will give an insight of the effectiveness and performance (in terms of memory and computational effort). 
An efficient way to determine irrelevant links is still an open problem. 
An exhaustive (and efficient) construction of sum of disjoint products should be explored. 
For instance, Ahmad method shows to be simple and satisfactorily finds a sum of disjoint products, called \emph{branches}, in 
the source-terminal classical reliability scheme. These branches have been adapted to the context of classical $K$-terminal reliability. 
A natural step is to extend Ahmad method for DCR, first in source-terminal and then in $K$-terminal settings. 
Different exact algorithms can be compared on the lights of replacement graphs.

\bibliography{bib-dcr}
\bibliographystyle{aomalpha}
\end{document}